\documentclass[english]{ccdconf}

\usepackage[english]{babel}
\usepackage{theorem}
\usepackage[framemethod=tikz]{mdframed}
\theoremheaderfont{\itshape\bfseries}
{\theorembodyfont{\itshape}
\newtheorem{assumption}{\textbf{Assumption}}
\newtheorem{lemma}{\textbf{Lemma}}
\newtheorem{definition}{\textbf{Definition}}
\newtheorem{theorem}{\textbf{Theorem}}

\newtheorem{remark}{\textbf{Remark}}

\newtheorem{problem}{\textbf{Problem}}
}
\usepackage{graphicx}
\usepackage{epsfig}
\usepackage{mathptmx}
\usepackage{times}
\usepackage{amsmath}
\usepackage{amsfonts}
\usepackage{amssymb}
\usepackage{mathrsfs}
\usepackage{xcolor}
\usepackage{graphicx}
\usepackage{color,soul}

\newcommand{\T}{^{\mbox{\tiny T}}}
\newcommand{\R}{\mathbb{R}}

\newcommand{\eps}{\varepsilon}
\let\leq\leqslant
\let\geq\geqslant

\newenvironment{proof}[1][Proof]%
{\par\noindent\textit{#1:\ }}%
{\hspace*{\fill} \rule{6pt}{6pt}}
\newenvironment{proof*}[1][Proof]%
{\par\noindent\textit{#1:\ }}{}

\DeclareMathOperator{\diag}{diag}

\usepackage{tikz}
\usetikzlibrary{shapes,calc,arrows,patterns,decorations.pathmorphing
	,decorations.markings}
\usetikzlibrary{arrows.meta}

\newenvironment{system}[1]%
{\setlength{\arraycolsep}{0.5mm}\left\{ \; \begin{array}{#1}}%
    {\end{array} \right.}
\newenvironment{system*}[1]%
{\setlength{\arraycolsep}{0.5mm} \begin{array}{#1}}%
  {\end{array}}

\begin{document}

\title{Regulated State Synchronization for Homogeneous Networks of Non-introspective Agents in Presence of Input Delays: A Scale-Free Protocol Design}
\author{Zhenwei Liu\aref{neu}, Donya Nojavanzadeh\aref{wsu}, Dmitri Saberi\aref{stu}, Ali Saberi\aref{wsu},
	Anton A. Stoorvogel\aref{ut}}

\affiliation[neu]{College of Information Science and
	Engineering, Northeastern University, Shenyang 110819, China
	\email{liuzhenwei@ise.neu.edu.cn}}
\affiliation[wsu]{School of Electrical Engineering and Computer
	Science, Washington State University, Pullman, WA 99164, USA
	\email{donya.nojavanzadeh@wsu.edu; saberi@eecs.wsu.edu}}
\affiliation[stu]{Stanford University, Stanford, CA 94305, USA\email{dsaberi@stanford.edu}}
\affiliation[ut]{Department of Electrical Engineering,
	Mathematics and Computer Science, University of Twente, Enschede, The Netherlands
	\email{A.A.Stoorvogel@utwente.nl}}

\maketitle

\begin{abstract}
   This paper studies regulated state synchronization of homogeneous networks of non-introspective agents in presence of unknown nonuniform input delays. A scale free protocol is designed based on additional information exchange, which does not need any knowledge of the directed network topology and the spectrum of the associated Laplacian matrix. The proposed protocol is scalable and achieves state synchronization for any arbitrary number of agents. Meanwhile, an upper bound for the input delay tolerance is obtained, which explicitly depends on the agent dynamics.
  \end{abstract}

	\keywords{Multi-agent systems, Regulated state synchronization, Input delays }
	
	\footnotetext{This work is supported by Nature Science
		Foundation of Liaoning Province under Grant 2019-MS-116.}
\section{Introduction}

The synchronization problem of multi-agent systems (MAS) has attracted
substantial attention during the past decade, due to the wide potential for
applications in several areas such as automotive vehicle control,
satellites/robots formation, sensor networks, and so on. See for
instance the books \cite{ren-book} and \cite{wu-book} or the
survey paper \cite{saber-murray3}.

State synchronization inherently requires homogeneous networks
(i.e. agents which have identical dynamics). Therefore, in this paper we
focus on homogeneous networks. So far, most work has focused on state
synchronization based on diffusive full-state coupling, where the
agent dynamics progress from single- and double-integrator dynamics
(e.g.  \cite{saber-murray2}, \cite{ren}, \cite{ren-beard}) to more
general dynamics (e.g. \cite{scardovi-sepulchre}, \cite{tuna1},
\cite{wieland-kim-allgower}). State synchronization based on
diffusive partial-state coupling has also been considered, including static design (\cite{liu-zhang-saberi-stoorvogel-auto} and \cite{liu-zhang-saberi-stoorvogel-ejc}), dynamic design (\cite{kim-shim-back-seo}, \cite{seo-back-kim-shim-iet}, \cite{seo-shim-back}, \cite{su-huang-tac},
\cite{tuna3}), and the design with additional communication (\cite{chowdhury-khalil} and \cite{scardovi-sepulchre}). 

Meanwhile, time-delay
effects are ubiquitous in any communication scheme. As clarified in \cite{cao-yu-ren-chen}, we can
identify two kinds of delays. Firstly, there is the notion of a \emph{communication
	delay}, which results from limitations on the communication between
agents. Secondly, we have the \emph{input delay}, which is due to
computational limitations of an individual agent. Many works have
focused on dealing with input delay, progressing from single- and
double-integrator agent dynamics (see e.g.\ \cite{saber-murray2},
\cite{tian-liu}, \cite{tian}, \cite{xiao-wang}) to more general agent
dynamics (see e.g. \cite{stoorvogel-saberi-acc},
\cite{nonidentical-delay-c}, \cite{consensus-identical-delay-c},
\cite{zhang-saberi-stoorvogel-cdc15}, \cite{liu-stoorvogel-saberi-zhang-cdc17}). Its objective is to derive an
upper bound on the input delay such that agents can still achieve
synchronization. Moreover, such an upper bound always depends on the
agent dynamics and the network properties.


%
In this paper, we deal with regulated state synchronization problem for MAS in presence of unknown nonuniform input delays by tracking the trajectory of an exosystem. Meanwhile, we can obtain an upper bound for the input delay tolerance, which only depends on the agent dynamics. We design dynamic protocols by using additional information exchange for MAS with both full- and partial-state coupling. The protocol design is scalefree, namely
\begin{itemize}
	\item  The design is independent of information about communication networks. That is to say, the dynamical protocol can work for any communication network such that all of its nodes have path to the exosystem. 
\item  The dynamic protocols are designed for networks with input delays where the admissible upper bound on delays only depends on agent model and does not depend on communication network and the number of agents.
\item The proposed protocols archive regulated state synchronization for any MAS with any number of agents, any admissible non uniform input delays,
and any communication network.
\end{itemize}

\subsection*{Notations and definitions}

Given a matrix $A\in \mathbb{R}^{m\times n}$, $A\T$ denotes its
conjugate transpose and $\|A\|$ is the induced 2-norm. Let $j$ indicate $\sqrt{-1}$. A square matrix
$A$ is said to be Hurwitz stable if all its eigenvalues are in the
open left half complex plane. We denote by
$\diag\{A_1,\ldots, A_N \}$, a block-diagonal matrix with
$A_1,\ldots,A_N$ as the diagonal elements. $A\otimes B$ depicts the
Kronecker product between $A$ and $B$. $I_n$ denotes the
$n$-dimensional identity matrix and $0_n$ denotes $n\times n$ zero
matrix; sometimes we drop the subscript if the dimension is clear from
the context. Moreover let $\mathfrak{C}_\tau^n=\mathit{C}([-\tau,0],\mathbb{R}^n)$ denote the Banach space of all continues functions from $[-\tau,0]\to \mathbb{R}^n$ with norm
\[
\|x\|_{\mathit{C}}=\sup_{t\in[-\tau,0]}\|x(t)\|.
\]

To describe the information flow among the agents we associate a \emph{weighted graph} $\mathcal{G}$ to the communication network. The weighted graph $\mathcal{G}$ is defined by a triple
$(\mathcal{V}, \mathcal{E}, \mathcal{A})$ where
$\mathcal{V}=\{1,\ldots, N\}$ is a node set, $\mathcal{E}$ is a set of
pairs of nodes indicating connections among nodes, and
$\mathcal{A}=[a_{ij}]\in \mathbb{R}^{N\times N}$ is the weighted adjacency matrix with non negative elements $a_{ij}$. Each pair in $\mathcal{E}$ is called an \emph{edge}, where
$a_{ij}>0$ denotes an edge $(j,i)\in \mathcal{E}$ from node $j$ to
node $i$ with weight $a_{ij}$. Moreover, $a_{ij}=0$ if there is no
edge from node $j$ to node $i$. We assume there are no self-loops,
i.e.\ we have $a_{ii}=0$. A \emph{path} from node $i_1$ to $i_k$ is a
sequence of nodes $\{i_1,\ldots, i_k\}$ such that
$(i_j, i_{j+1})\in \mathcal{E}$ for $j=1,\ldots, k-1$. A \emph{directed tree} is a subgraph (subset
of nodes and edges) in which every node has exactly one parent node except for one node, called the \emph{root}, which has no parent node. The \emph{root set} is the set of root nodes. A \emph{directed spanning tree} is a subgraph which is
a directed tree containing all the nodes of the original graph. If a directed spanning tree exists, the root has a directed path to every other node in the tree.  

For a weighted graph $\mathcal{G}$, the matrix
$L=[\ell_{ij}]$ with
\[
\ell_{ij}=
\begin{system}{cl}
\sum_{k=1}^{N} a_{ik}, & i=j,\\
-a_{ij}, & i\neq j,
\end{system}
\]
is called the \emph{Laplacian matrix} associated with the graph
$\mathcal{G}$. The Laplacian matrix $L$ has all its eigenvalues in the
closed right half plane and at least one eigenvalue at zero associated
with right eigenvector $\textbf{1}$ \cite{royle-godsil}. Moreover, if the graph contains a directed spanning tree, the Laplacian matrix $L$ has a single eigenvalue at the origin and all other eigenvalues are located in the open right-half complex plane \cite{ren-book}.

\section{Problem formulation}

Consider a MAS consisting of $N$ identical linear dynamic agents with input delays:
\begin{equation}\label{eq1}
\begin{cases}
\dot{x}_i(t)=Ax_i(t)+Bu_i(t-\tau_i),\\
y_i(t)=Cx_i(t),\\
x_i(\delta)=\phi_i(\delta), \quad \delta\in[-\bar{\tau},0]
\end{cases}
\end{equation}
where $x_i(t)\in\mathbb{R}^n$, $y_i(t)\in\mathbb{R}^q$ and
$u_i(t)\in\mathbb{R}^m$ are the state, output, and the input of agent 
$i=1,\ldots, N$, respectively. $\tau_i$ represent the input delays with $\tau_i\in[0,\bar{\tau}]$, where $\bar{\tau} = \max_{i} \tau_i$ and $\phi_i \in \mathfrak{C}_{\bar{\tau}}^n$.

\begin{assumption}\label{Aass}
	We assume that:
	\begin{itemize}		
		\item[(i)] $(A,B)$ are stabilizable and $(C,A)$ are detectable.
		\item[(ii)] All eigenvalues of $A$ are in the closed left half plane. 
	\end{itemize}
\end{assumption}

The network provides agent $i$ with the following information,
\begin{equation}\label{eq2}
\zeta_i(t)=\sum_{j=1}^{N}a_{ij}(y_i(t)-y_j(t)),
\end{equation}
where $a_{ij}\geq 0$ and $a_{ii}=0$. This communication topology of
the network can be described by a weighted graph $\mathcal{G}$ associated with \eqref{eq2}, with
the $a_{ij}$ being the coefficients of the weighting matrix
$\mathcal{A}$ not of the dynamics matrix $A$ introduced in\eqref{eq1}). In terms of the coefficients of the associated
Laplacian matrix $L$, $\zeta_i$ can be rewritten as
\begin{equation}\label{zeta_l}
\zeta_i(t) = \sum_{j=1}^{N}\ell_{ij}y_j(t).
\end{equation}
We refer to this as \emph{partial-state coupling} since only part of
the states are communicated over the network. When $C=I$, it means all states are communicated over the network, we call it \emph{full-state coupling}. Then, the original agents are expressed as
\begin{equation}\label{neq1}
\dot{x}_i(t)=Ax_i(t)+Bu_i(t-\tau_i)
\end{equation}
and $\zeta_i$ is rewritten as
\[
\zeta_i(t) = \sum_{j=1}^{N}\ell_{ij}x_j(t).
\]

Obviously, state synchronization is achieved if
\begin{equation}\label{synch_org}
\lim_{t\to \infty} (x_i(t)-x_j(t))=0.
\end{equation}
for all $i,j \in {1,...,N}$.


In this paper, we consider regulated
state synchronization. The reference trajectory is generated by the following exosystem
\begin{equation}\label{solu-cond}
\begin{system*}{rl}
\dot{x}_r(t) & = A x_r(t)\\
y_r(t)&=Cx_r(t).
\end{system*}	
\end{equation}
with $x_r\in\R^n$.  Our objective is that the agents achieve regulated
state synchronization, that is
\begin{equation}\label{synchro}
\lim_{t\to \infty} (x_i(t)-x_r(t))=0,
\end{equation}
for all $i\in\{1,\ldots,N\}$. Clearly, we need some level of
communication between the exosystem and the agents.  We assume that a nonempty
subset $\mathcal{C}$ of the agents have access to their
own output relative to the output of the exosystem.  Specially, each
agent $i$ has access to the quantity
\begin{equation}\label{elp}
\psi_i=\iota_{i}(y_i(t)-y_r(t)),\qquad
\iota_i=
\begin{cases}
1, & i\in \mathcal{C},\\
0, & i\notin \mathcal{C}.
\end{cases}
\end{equation}
By combining this with \eqref{zeta_l}, we have the following information
exchange
\begin{equation}\label{zeta_l-n}
\bar{\zeta}_i(t) = \sum_{j=1}^{N}a_{ij}(y_i(t)-y_j(t))+\iota_{i}(y_i(t)-y_r(t)).
\end{equation}
Meanwhile, \eqref{zeta_l-n} will change as
\begin{equation}\label{zeta_l-nn}
\bar{\zeta}_i (t)= \sum_{j=1}^{N}a_{ij}(x_i(t)-x_j(t))+\iota_{i}(x_i(t)-x_r(t))
\end{equation}
for full-state coupling case.

To guarantee that each agent can achieve the required regulation, we need to make sure that there exists a path to each node starting with node from the set $\mathscr{C}$. Motivated by this requirement, we define the following set of graphs.

\begin{definition}\label{graph-def}
	Given a node set $\mathscr{C}$, we denote by $\mathbb{G_\mathscr{C}^N}$ the set of all graphs with $N$ nodes containing the node set $\mathscr{C}$, such that every node of the network graph $\mathscr{G}\in \mathbb{G_\mathscr{C}^N}$ is a member of a directed tree which has its root contained in the node set $\mathscr{C}$.
\end{definition}
\begin{remark}
	Note that Definition \ref{graph-def} does not require necessarily the existence of directed spanning tree.
\end{remark}
From now on, we will refer to the node set $\mathscr{C}$ as the \emph{root set} in view of Definition \ref{graph-def}. For any graph $\mathcal{G} \in \mathbb{G_\mathscr{C}^N}$, with Laplacian matrix $L$, we define the expanded Laplacian matrix as:
\[
\bar{L}=L+\diag\{\iota_i\}=[\bar{l}_{ij}]_{N\times N}
\]
which is not a regular Laplacian matrix associated to the graph, since the sum of its rows need not be zero. We observe that Definition \ref{graph-def} guarantees that all the eigenvalues of $\bar{L}$ have positive real parts. In particular, we have that $\bar{L}$ is invertible.

In this paper, we introduce an additional
information exchange among protocols. In particular, each agent 
$i=1,\ldots, N$ has access to additional information, denoted by
$\hat{\zeta}_i$, of the form
\begin{equation}\label{eqa1}
\hat{\zeta}_i(t)=\sum_{j=1}^Na_{ij}(\xi_i(t)-\xi_j(t))
\end{equation}
where $\xi_j\in\mathbb{R}^n$ is a variable produced internally by agent $j$ and to be defined in next sections.

We formulate the problem for regulated state
synchronization of a MAS with full- and partial-state coupling.

\begin{problem}\label{prob3}
	Consider a MAS described by \eqref{eq1} satisfying Assumption \ref{Aass}, with a given $\bar{\tau}$ and the
	associated exosystem \eqref{solu-cond}. 
	Let a set of nodes
	$\mathcal{C}$ be given which defines the set
	$\mathbb{G}_{\mathcal{C}}^N$ and 	
	let the asssociated network
	communication graph $\mathcal{G} \in \mathbb{G}_{\mathcal{C}}^N$ be
	given by \eqref{zeta_l-n}.
	
	The \textbf{scalable regulated state synchronization problem with additional information exchange} of a MAS is to find,
	if possible, a linear dynamic protocol for each agent $i \in \{1,\hdots,N\}$, using only knowledge of agent model, i.e., $(A,B,C)$, and upper bound of delays $\bar{\tau}$, of the form:
	\begin{equation}\label{protoco4}
	\begin{system}{cl}
	\dot{x}_{c,i}(t)&=A_{c,i}x_{c,i}(t)+B_{c,i}u_i(t-\tau_i)+C_{c,i}\bar{\zeta}_i(t)+D_{c,i}\hat{\zeta}_i(t),\\
	u_i(t)&=F_{c,i}x_{c,i}(t),
	\end{system}
	\end{equation}
	where $\hat{\zeta}_i(t)$ is defined in \eqref{eqa1} with $\xi_i(t)=H_{c}x_{i,c}(t)$, and $x_{c,i}(t)\in\R^{n_i}$, such that regulated
	state synchronization \eqref{synchro} is achieved for any $N$ and any
	graph $\mathcal{G}\in \mathbb{G}_{\mathcal{C}}^N$.
\end{problem}

%

\section{Protocol Design}

In this section, we will consider the regulated state synchronization problem for a MAS with input delays. In particular, we cover separately systems with full-state coupling and those with partial-state coupling.

\subsection{Full-state coupling}

Firstly, we define
\[
\omega_{\max}=
\begin{cases}
0, & \text{A is Hurwitz},\\
\max\{\omega\in\R | \det(j\omega I-A)=0\}, &\text{otherwise}.
\end{cases}
\]

Then, we design a dynamic protocol with additional information exchanges for agent
$i\in\{1,\ldots,N\}$ as follows.
\begin{equation}\label{pscp1}
\begin{system}{cll}
\dot{\chi}_i(t) &=& A\chi_i(t)+Bu_i(t-\tau_i)+\bar{\zeta}_i(t)-\hat{\zeta}_i(t)-\iota_{i}\chi_i(t) \\
u_i(t) &=& - \rho B\T P_{\eps}\chi_i(t),
\end{system}
\end{equation}
where $\rho>0$ and
$\eps$ is a parameter satisfying
$\eps\in (0,1]$, $P_{\eps}$ satisfies
\begin{equation}\label{arespecial}
A\T P_{\eps} + P_{\eps} A -  P_{\eps} BB\T
P_{\eps} + \eps I = 0 
\end{equation}
Note that for any $\eps>0$, there exists a unique solution of \eqref{arespecial}.

The agents communicate $\xi_i$, which are chosen as $\xi_i(t)=\chi_i(t)$, therefore each agent has access to the following information:
\begin{equation}\label{info1}
\hat{\zeta}_i(t)=\sum_{j=1}^Na_{ij}(\chi_i(t)-\chi_j(t)).
\end{equation}
while $\bar{\zeta}_i(t)$ is defined by \eqref{zeta_l-nn}.\\

Our formal result is stated in the following theorem.
\begin{theorem}\label{mainthm1}
	Consider a MAS described by \eqref{neq1} satisfying Assumption \ref{Aass}, with a given $\bar{\tau}$ and the
	associated exosystem \eqref{solu-cond}. 
	Let a set of nodes
	$\mathcal{C}$ be given which defines the set
	$\mathbb{G}_{\mathcal{C}}^N$ and 	
	let the asssociated network
	communication graph $\mathcal{G} \in \mathbb{G}_{\mathcal{C}}^N$ be
	given by \eqref{zeta_l-nn}.
	
	Then the scalable regulated state synchronization problem as stated in Problem
	\ref{prob3} is solvable if
	\begin{equation}\label{boundtau}
	\bar{\tau}\omega_{\max}<\frac{\pi}{2}. 
	\end{equation}
	In particular, there exist $\rho>1$ and $\eps^* > 0$ that depend only on $\bar{\tau}$ and the agent models such that, for any $\eps\in(0,\eps^*]$, the dynamic protocol given by \eqref{pscp1} and \eqref{arespecial} solves the scalable regulated state
	synchronization problem for any $N$ and any graph
	$\mathcal{G}\in\mathbb{G}_{\mathcal{C}}^N$.
\end{theorem}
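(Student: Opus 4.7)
The plan is to reduce the coupled delayed network to two decoupled problems: an autonomous (delay-free) error between the internal controller state and the true tracking error, and a family of identical scalar delayed low-gain systems, one per agent. The scale-free nature will be automatic once this decomposition is in place.

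First, I introduce the tracking error $e_i(t)=x_i(t)-x_r(t)$ and the auxiliary variable $\eta_i(t)=\chi_i(t)-e_i(t)$. Since $\dot{x}_r=Ax_r$, $e_i$ obeys
\[
\dot{e}_i(t)=Ae_i(t)+Bu_i(t-\tau_i).
\]
Substituting \eqref{zeta_l-nn} into $\bar{\zeta}_i$, and $\xi_j=\chi_j=\eta_j+e_j$ into $\hat{\zeta}_i$, the terms involving $e_i-e_j$ and $\iota_i e_i$ cancel and the delayed input term cancels between $\dot{\chi}_i$ and $\dot{e}_i$. A direct computation then yields, in stacked form,
\[
\dot{\eta}(t)=(I_N\otimes A-\bar{L}\otimes I_n)\eta(t),
\]
which is autonomous and \emph{free of delay and of the control $u_i$}. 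Using a Schur triangularization of $\bar{L}$, the eigenvalues of this operator are $\mu-\lambda$ with $\mu\in\sigma(A)$ (so $\re\mu\leq0$ by Assumption \ref{Aass}(ii)) and $\lambda\in\sigma(\bar{L})$. By Definition \ref{graph-def} every $\lambda$ has $\re\lambda>0$, so $\eta(t)\to0$ exponentially for \emph{any} graph in $\mathbb{G}_{\mathcal{C}}^N$ and any $N$, with no delay-tolerance condition required at this stage.

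Next, substituting $\chi_i=e_i+\eta_i$ into $u_i=-\rho B\T P_\eps\chi_i$ gives
\[
\dot{e}_i(t)=Ae_i(t)-\rho BB\T P_\eps\, e_i(t-\tau_i)-\rho BB\T P_\eps\,\eta_i(t-\tau_i).
\]
Because the $\eta_i$ decay exponentially, it suffices to prove exponential stability of the \emph{decoupled, single-agent} delayed low-gain closed loop
\[
\dot{e}_i(t)=Ae_i(t)-\rho BB\T P_\eps\, e_i(t-\tau_i),\qquad \tau_i\in[0,\bar\tau],
\]
and then conclude by a standard cascade/input-to-state-stability argument using the exponentially decaying $\eta$ as a vanishing perturbation. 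The scale-free character will then follow: this reduced system depends only on $(A,B,\bar\tau)$, not on $N$ nor on the graph.

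The main obstacle, and the step that forces both the low-gain parameter $\eps^{*}$ and the delay bound \eqref{boundtau}, is the stability of the above delayed low-gain system. My plan here is a parametric Lyapunov-Krasovskii analysis built around $P_\eps$: rewriting $e_i(t-\tau_i)=e_i(t)-\int_{-\tau_i}^{0}\dot{e}_i(t+s)\,ds$, I substitute $\dot{e}_i$ back and obtain a distributed-delay representation. A candidate Lyapunov functional of the form
\[
V(e_{i,t})=e_i\T(t)P_\eps e_i(t)+\rho\int_{-\bar\tau}^{0}\!\int_{t+\theta}^{t}\!e_i\T(s)\Pi_\eps\, e_i(s)\,ds\,d\theta,
\]
with $\Pi_\eps$ built from $P_\eps$, together with the Riccati identity \eqref{arespecial} and the well-known low-gain estimates $\|P_\eps\|\to0$ and $\|B\T P_\eps\|=O(\sqrt{\eps})$ as $\eps\downarrow0$, should yield $\dot V\leq -\tfrac{\eps}{2}\|e_i\|^2$ for all $\eps\in(0,\eps^{*}]$ and some $\rho>1$. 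The role of the condition $\bar\tau\,\omega_{\max}<\pi/2$ is exactly to control, in the frequency domain (or equivalently through the cross-terms generated by the integral-remainder substitution above), the destabilizing effect of the delay on the uncontrollable imaginary-axis eigenvalues of $A$: those modes give oscillation frequencies $|\omega|\leq\omega_{\max}$, and the classical Bode/phase bound $\bar\tau\omega<\pi/2$ is precisely what keeps their delayed feedback sign-definite. This is the one point that genuinely requires care; once the single-agent delayed loop is shown exponentially stable uniformly in $\tau_i\in[0,\bar\tau]$, the cascade with the Hurwitz $\eta$-dynamics closes the argument, giving $e_i(t)\to 0$ for every $i$ and hence \eqref{synchro}.
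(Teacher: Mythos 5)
Your reduction is exactly the paper's: the change of variables $\tilde x_i=x_i-x_r$ together with $e=\tilde x-\chi$ (your $\eta=-e$) decouples the closed loop into the autonomous, delay-free dynamics $\dot e=(I\otimes A-\bar L\otimes I)e$, which is Hurwitz by the Jordan/Schur argument on $\bar L$ for any $N$ and any graph in $\mathbb{G}_{\mathcal{C}}^N$, and the per-agent delayed low-gain loop $\dot{\tilde x}_i=A\tilde x_i-\rho BB\T P_\eps\tilde x_i(t-\tau_i)$ driven by the exponentially vanishing $e^\tau$. That part of your proposal is correct and is where the scale-free property comes from in the paper as well.

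The gap is in the one step you yourself flag as the hard one: stability of the decoupled delayed loop, which is where the hypothesis $\bar\tau\omega_{\max}<\pi/2$ must actually be used. You propose a Lyapunov--Krasovskii functional built on $P_\eps$ after the Leibniz--Newton model transformation and assert it ``should yield'' $\dot V\leq-\tfrac{\eps}{2}\|e_i\|^2$, but you never exhibit the mechanism by which the phase condition $\bar\tau\omega_{\max}<\pi/2$ enters such a time-domain estimate. The standard cross-term bounds produced by that transformation give conditions of the form $\bar\tau\cdot c(\eps,\rho)<1$ with $c(\eps,\rho)\to0$ as $\eps\downarrow0$, which would (incorrectly) suggest arbitrary delay tolerance and cannot reproduce the sharp $\pi/2$ bound tied to the imaginary-axis frequencies of $A$; conversely, a functional that does capture it is not standard and would need to be constructed explicitly. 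The paper avoids this entirely by going to the frequency domain: Lemma \ref{hode-lemma-ddelay} reduces the question to $\det[j\omega I-A+\rho e^{-j\omega\tau_i}BB\T P_\eps]\neq0$, which is verified by splitting frequencies at $\omega_{\max}+\theta$ --- for high frequencies a small-gain argument using $\|\rho BB\T P_\eps\|\leq\mu/2$ (this is what fixes $\eps^*$), and for low frequencies the choice $\rho>1/\cos(\bar\tau\omega_{\max})$ makes $\re(\rho e^{-j\omega\tau_i})>1$ so that $A-\rho e^{-j\omega\tau_i}BB\T P_\eps$ is Hurwitz by the Riccati inequality. Until you supply a concrete functional and the computation showing where $\cos(\bar\tau\omega_{\max})$ appears, your argument does not establish the theorem; I would recommend either carrying out the frequency-domain route or citing a parametric-Lyapunov result that explicitly proves the $\pi/2$ bound for delayed low-gain feedback.
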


To obtain this result, we need the following lemma.

\begin{lemma}[\cite{consensus-identical-delay-c}]\label{hode-lemma-ddelay}
	Consider a linear time-delay system
	\begin{equation}\label{hode-lemma-system-c}
	\dot{x}(t)=Ax(t)+\sum_{i=1}^{m}A_{i}x(t-\tau_{i}),
	\end{equation}
	where $x(t)\in\R^{n}$ and $\tau_{i}\in\mathbb{R}$. Assume that 
	$A+\sum_{i=1}^{m}A_{i}$ is Hurwitz stable. Then,
	\eqref{hode-lemma-system-c} is asymptotically stable for
	$\tau_1,\ldots,\tau_N\in[0,\bar{\tau}]$ if
	\[
	\det[j\omega I-A- \sum_{i=1}^{m}e^{-j\omega\tau_i}A_{i}]\neq 0,
	\]
	for all $\omega\in\R$, and for all
	$\tau_1,\ldots,\tau_N\in[0,\bar{\tau}]$.
\end{lemma}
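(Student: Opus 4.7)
The plan is a homotopy argument on the delays. Define for $\theta\in[0,1]$ the parameterized characteristic function
\[
f(s;\theta)=\det\Bigl(sI-A-\sum_{i=1}^{m}e^{-s\theta\tau_i}A_i\Bigr),
\]
which is entire in $s$ and jointly continuous in $(s,\theta)$. At $\theta=0$ it reduces to the ordinary characteristic polynomial of $A+\sum_i A_i$, which is Hurwitz by hypothesis; at $\theta=1$ it is the characteristic function of \eqref{hode-lemma-system-c}. Because \eqref{hode-lemma-system-c} is of retarded type, its characteristic zeros accumulate only at $\re(s)=-\infty$, so asymptotic stability is equivalent to the absence of zeros of $f(\cdot;1)$ in the closed right half-plane. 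Moreover, since the hypothesis is assumed for every $\tau_i\in[0,\bar\tau]$ and since $\theta\tau_i\in[0,\tau_i]\subseteq[0,\bar\tau]$, it yields $f(j\omega;\theta)\neq 0$ for every $\omega\in\R$ and every $\theta\in[0,1]$: no zero of $f(\cdot;\theta)$ ever lies on the imaginary axis.

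Next I would establish a uniform bound on closed right half-plane zeros. For $\re(s)\geq 0$ and any $\theta\in[0,1]$ one has $|e^{-s\theta\tau_i}|=e^{-\re(s)\theta\tau_i}\leq 1$, so the matrix $M(s;\theta):=A+\sum_i e^{-s\theta\tau_i}A_i$ satisfies $\|M(s;\theta)\|\leq R:=\|A\|+\sum_i\|A_i\|$, with $R$ independent of $(s,\theta)$. A root $s$ of $f(\cdot;\theta)$ is, by definition, an eigenvalue of $M(s;\theta)$, hence $|s|\leq\|M(s;\theta)\|\leq R$. Consequently every zero of $f(\cdot;\theta)$ in the closed right half-plane is confined to the compact semidisk $K=\{s:\re(s)\geq 0,\,|s|\leq R\}$, uniformly in $\theta\in[0,1]$.

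The final step is a standard continuity-of-zeros argument via the argument principle. Since $f(\cdot;\theta)$ is nonzero on the imaginary segment of $\partial K$ (by the nonvanishing hypothesis) and on the outer semicircular arc (by the $R$-bound), for each $\theta\in[0,1]$ the integer
\[
N(\theta)=\frac{1}{2\pi j}\oint_{\partial K}\frac{\partial_s f(s;\theta)}{f(s;\theta)}\,ds
\]
is well defined, equals the number of zeros of $f(\cdot;\theta)$ in $K$ (counted with multiplicity), and depends continuously on $\theta$; being integer-valued and continuous on the connected set $[0,1]$, it is constant. At $\theta=0$, $f(\cdot;0)$ is the Hurwitz polynomial of $A+\sum_i A_i$, so $N(0)=0$; therefore $N(1)=0$, meaning $f(\cdot;1)$ has no closed right half-plane zeros, and \eqref{hode-lemma-system-c} is asymptotically stable. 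The main obstacle I anticipate is justifying cleanly both the reduction of retarded-type asymptotic stability to a closed right half-plane root condition and the uniform $R$-bound that prevents zeros from sliding in from infinity as $\theta$ varies; once those are in place, the remainder is routine complex analysis.
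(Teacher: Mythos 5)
The paper does not actually prove this lemma; it is imported verbatim by citation from the reference on consensus under identical delays, so there is no in-paper argument to compare against. Your homotopy proof is the standard route to results of this type and is essentially correct: scaling all delays by $\theta\in[0,1]$ keeps each $\theta\tau_i$ inside $[0,\bar\tau]$, so the hypothesis (which is quantified over the whole box of delay tuples) rules out imaginary-axis zeros along the entire deformation, and the eigenvalue bound $|s|\leq\|A\|+\sum_i\|A_i\|$ confines all closed right half-plane zeros to a fixed compact set, after which the argument-principle count is constant and vanishes at $\theta=0$. Two points need polish. First, your bound gives $|s|\leq R$ for right half-plane zeros, so a zero could in principle sit on the semicircular arc $|s|=R$ of your contour; take the semidisk of radius $R+1$ so that the arc is guaranteed zero-free and the winding integral is well defined. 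Second, the reduction of asymptotic stability to ``no zeros in the closed right half-plane'' is not automatic from ``zeros accumulate only at $\re(s)=-\infty$''; you need the standard retarded-type fact that for any $\delta>0$ the strip $\re(s)\geq-\delta$ contains only finitely many characteristic zeros (which follows from the same norm bound applied with $|e^{-s\tau_i}|\leq e^{\delta\tau_i}$), whence absence of zeros in $\re(s)\geq0$ forces $\sup\{\re(s):f(s;1)=0\}<0$ and exponential stability follows from the classical spectral theory of linear retarded equations. You flag both issues yourself; with them filled in, the proof is complete and self-contained, which is arguably more than the paper offers for this statement.
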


\begin{proof}[Proof of Theorem \ref{mainthm1}]
	Firstly, let $\tilde{x}_i(t)=x_i(t)-x_r(t)$, we have
	\[
	\dot{\tilde{x}}_i(t)=A{\tilde{x}}_i(t)+Bu_i(t-\tau_i)
	\]

	We define 
	\begin{align*}
	&\tilde{x}(t)=\begin{pmatrix}
	\tilde{x}_1(t)\\\vdots\\\tilde{x}_N(t)
	\end{pmatrix} , 
	\chi(t)=\begin{pmatrix}
	\chi_1(t)\\\vdots\\\chi_N(t)
	\end{pmatrix},  \\
	&\tilde{x}^\tau(t)=\begin{pmatrix}
	\tilde{x}_1(t-\tau_1)\\\vdots\\\tilde{x}_N(t-\tau_N)
	\end{pmatrix},\text{ and }  
	\chi^\tau(t)=\begin{pmatrix}
	\chi_1(t-\tau_1)\\\vdots\\\chi_N(t-\tau_N)
	\end{pmatrix}
	\end{align*}
	then we have the following closed-loop system
	\begin{equation}
	\begin{system*}{ll}
	\dot{\tilde{x}}(t)=&(I\otimes A) \tilde{x}(t)- \rho(I\otimes BB\T P_\eps)\chi^\tau(t)\\
	\dot{\chi}(t)=&(I\otimes A) \chi(t)-\rho (I\otimes BB\T P_\eps)\chi^\tau(t)\\
	&\qquad\qquad+(\bar{L}\otimes I)(\tilde{x}(t)-\chi(t)).
	\end{system*}
	\end{equation}

	Let $e(t)=\tilde{x}(t)-\chi(t)$, we can obtain  
	\begin{equation}\label{newsystem2}
	\begin{system*}{l}
	\dot{\tilde{x}}(t)=(I\otimes A) \tilde{x}(t)- \rho (I\otimes BB\T P_\eps)\tilde{x}^\tau(t)+ \rho (I\otimes BB\T P_\eps)e^\tau(t)\\
	\dot{e}(t)=(I\otimes A-\bar{L}\otimes I)e(t)
	\end{system*}
	\end{equation}
	where $e^\tau(t)=\tilde{x}^\tau(t)-\chi^\tau(t)$.
	
	The proof proceeds in two steps.
	
	\textbf{Step 1:}  First, we prove the stability of system \eqref{newsystem2} without delays, i.e.
	
	\begin{equation}\label{newsystem22}
	\begin{system*}{l}
	\dot{\tilde{x}}=(I\otimes A) \tilde{x}- \rho (I\otimes BB\T P_\eps)\tilde{x}+\rho (I\otimes BB\T P_\eps)e\\
	\dot{e}=(I\otimes A-\bar{L}\otimes I)e
	\end{system*}
	\end{equation}
	
	Since all eigenvalues of $\bar{L}$ are positive, we have
	\begin{equation}\label{boundapl}
	(T\otimes I)(I\otimes A-\bar{L}\otimes I)(T^{-1}\otimes I)=I\otimes A-\bar{J}\otimes I
	\end{equation}
	for a non-singular transformation matrix $T$, where	\eqref{boundapl}  is upper triangular Jordan form with $A-\lambda_i I$ for $i=1,\cdots,N-1$ on the diagonal. Since all eigenvalues of $A$ are in the closed left half plane, $A-\lambda_i I$ is stable. Therefore, all eigenvalues of $I\otimes A-\bar{L}\otimes I$ have negative real part.
	Therefore, we have that the dynamics for $e$ is asymptotically stable.

	According to the above result, for \eqref{newsystem22} we just need to prove the stability of 
	\[
	\dot{\tilde{x}}=[I\otimes (A-\rho BB\T P_\eps)] \tilde{x}
	\]
	or the stability of 
	\[
	A-\rho BB\T P_\eps
	\]
	
	Based on the ARE \eqref{arespecial}, for a positive definite matrix $P_\eps$, we have
	\begin{align*}
	&P_\eps(A- \rho BB\T P_\eps)+(A- \rho BB\T P_\eps)\T P_\eps\\
	\leq &-\eps I-(2\rho -1)P_\eps BB\T P_\eps\\
	< &0
	\end{align*}
	for $\eps >0$ and $\rho >1$.
	
	\textbf{Step2:} In this step, since we have that $e_i$ is asymptotically stable, we just need to prove the stability of
	\[
	\dot{\tilde{x}}_i(t)= A \tilde{x}_i(t)- \rho BB\T P_\eps\tilde{x}_i(t-\tau_i)
	\]
	for $i=1,\hdots, N$. Following Lemma \ref{hode-lemma-ddelay}
	we need to prove
	\begin{equation}\label{bound-condition}
	\det[j\omega I-A+ \rho e^{-j\omega\tau_i}BB\T P_\eps]\neq 0
	\end{equation}
	for $\omega\in \R$ and $\tau_i\in[0,\bar{\tau}]$. We choose $\rho$, such that  
	\begin{equation}	
	\rho>\frac{1}{\cos(\bar{\tau}\omega_{\max})}.
	\end{equation}
	Let $\rho$ be fixed. Meanwhile, we note that there exists a $\theta$ such that
	\[
	\rho>\frac{1}{\cos(\bar{\tau}\omega)}, \forall |\omega|<\omega_{\max}+\theta
	\]
	
	Then, we split the proof of \eqref{bound-condition} into
	two cases where $|\omega|\geq \omega_{\max}+\theta$ and $|\omega|<\omega_{\max}+\theta$ respectively.
	
	If $|\omega|\geq \omega_{\max}+\theta$, we have
	$\det(j\omega I-A)\neq 0$, which yields
	$\sigma_{\min}(j\omega I-A)>0$. Hence, we can state that there exists a $\mu>0$ such
	that
	\[
	\sigma_{\min}(j\omega I-A)>\mu,\quad \forall \omega \text{ such
		that } |\omega|\geq \omega_{\max}+\theta.
	\]
	
	The above bound always exists by observing that
	for $|\omega|>\bar{\omega}:=\max\{\|A\|+1, \omega_{\max}+\theta\}$, we have
	$\sigma_{\min}(j\omega I-A)>|\omega|-\|A\|>1$. However, for 
	$\omega_{\max}+\theta<|\omega|<\bar{\omega}$, there exist a $\mu\in(0,1]$, such
	that $\sigma_{\min}(j\omega I-A)\geq\mu$ since $\sigma_{\min}(j\omega I-A)$ depends
	continuously on $\omega$.
	
	Given $\rho$, there exists $\eps^*>0$ such that $\|\rho BB\T P_{\eps}\|\leq \mu/2$. Then, we obtain
	\[
	\sigma_{\min}(j\omega I-A+\rho e^{-j\omega\tau_i}BB\T P_\eps)\geq \mu-\tfrac{\mu}{2}\geq \tfrac{\mu}{2}.
	\]
	Therefore, condition
	\eqref{bound-condition} holds for
	$|\omega|\geq \omega_{\max}+\theta$.
	
	Now, it remains to show that condition \eqref{bound-condition} holds for $|\omega|<\omega_{\max}+\theta$. We find that
	\[
	-\omega\tau_i<|\omega| \bar{\tau}\leq\dfrac{\pi}{2},
	\]
	and hence 
	\[
	\rho \cos(-\omega\tau_i)>\rho \cos(|\omega| \bar{\tau})>1.
	\]
	It implies that we have
	\[
	A- \rho e^{-j\omega\tau_i}BB\T P_\eps
	\]
	is Hurwitz stable for a positive definite matrix $P_\eps$ (see \cite[Lemma C.1]{consensus-identical-delay-c}). Therefore,  \eqref{bound-condition} 
	holds for $|\omega|<\omega_{\max}+\theta$. Thus, we can obtain the regulated state synchronization result based on Lemma \ref{hode-lemma-ddelay} by choosing
	\[
	\rho>\frac{1}{\cos(\bar{\tau}\omega_{\max} )}.
	\] 	
\end{proof}

\subsection{Partial-state coupling}

In this subsection, we will consider the case via partial-state coupling.
We design the following dynamic protocol with additional information exchanges as follows.
\begin{equation}\label{pscp3}
\begin{system}{cll}
\dot{\hat{x}}_i(t) &=& A\hat{x}_i(t)+B\Phi_i^\tau+K(\bar{\zeta}_i(t)-C\hat{x}_i(t))+\iota_i Bu_i(t-\tau_i) \\
\dot{\chi}_i(t) &=& A\chi_i(t)+Bu_i(t-\tau_i)+\hat{x}_i(t)-\check{\zeta}_i(t)-\iota_{i}\chi_i(t) \\
u_i(t) &=& -\rho B\T P_{\eps}\chi_i(t),
\end{system}
\end{equation}
for $i=1,\ldots,N$ where $K$ is a pre-design matrix such that $A-KC$ is Hurwitz stable and $\rho>0$. $\eps$ is a parameter satisfying
$\eps\in (0,1]$, $P_{\eps}$ satisfies \eqref{arespecial} and is the unique solution of \eqref{arespecial} for any $\eps>0$. $\rho$ and $\omega_{\max}$ are defined in \eqref{pscp1}.
In this protocol, the agents communicate $\chi_i$ and $u_i(t-\tau_i)$, i.e. each agent has access to the following additional information:
\begin{equation}\label{add_1}
\check{\zeta}_i(t)=\sum_{j=1}^Na_{ij}(\chi_i(t)-\chi_j(t)),
\end{equation}
and
\begin{equation}\label{add_2}
\Phi_i^\tau(t)=\sum_{j=1}^{N}a_{ij}(u_i(t-\tau_i)-u_j(t-\tau_j)).
\end{equation}
$\bar{\zeta}_i(t)$ is also defined as \eqref{zeta_l-n}. Then we have the following theorem for MAS via partial-state coupling.

\begin{theorem}\label{mainthm2}
	Consider a MAS described by \eqref{eq1} satisfying Assumption \ref{Aass}, with a given $\bar{\tau}$ and the
	associated exosystem \eqref{solu-cond}. 
	Let a set of nodes
	$\mathcal{C}$ be given which defines the set
	$\mathbb{G}_{\mathcal{C}}^N$ and 	
	let the asssociated network
	communication graph $\mathcal{G} \in \mathbb{G}_{\mathcal{C}}^N$ be
	given by \eqref{zeta_l-n}.
	
	Then the scalable regulated state synchronization problem as stated in Problem
	\ref{prob3} is solvable if \eqref{boundtau} holds.
	In particular, there exist $\rho>1$ and $\eps^* > 0$ that depend only on $\bar{\tau}$ and the agent models such that, for any $\eps\in(0,\eps^*]$, the dynamic protocol given by \eqref{pscp3} and \eqref{arespecial} solves the scalable regulated state
	synchronization problem for any $N$ and any graph
	$\mathcal{G}\in\mathbb{G}_{\mathcal{C}}^N$.
\end{theorem}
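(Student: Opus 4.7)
The plan is to reduce Theorem~\ref{mainthm2} to Theorem~\ref{mainthm1} by showing that the observer copy $\hat{x}_i$ asymptotically recovers the aggregated tracking error, after which the $\chi_i$-dynamics become a vanishingly-perturbed version of the full-state closed loop analyzed in Theorem~\ref{mainthm1}.

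First, as in the full-state proof, I would set $\tilde{x}_i(t)=x_i(t)-x_r(t)$, so that $\dot{\tilde{x}}_i=A\tilde{x}_i+Bu_i(t-\tau_i)$. Stacking these and using $\bar{L}=L+\diag\{\iota_i\}$, define the aggregate error $\bar{x}=(\bar{L}\otimes I)\tilde{x}$. A direct calculation gives
\begin{equation*}
\dot{\bar{x}}_i = A\bar{x}_i + B\Phi_i^\tau(t) + \iota_i B u_i(t-\tau_i),
\qquad \bar{\zeta}_i(t)=C\bar{x}_i(t),
\end{equation*}
which matches the $\hat{x}_i$ equation in \eqref{pscp3} except for the innovation term $K(\bar{\zeta}_i-C\hat{x}_i)=KC(\bar{x}_i-\hat{x}_i)$. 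This is the whole purpose of introducing the communicated quantities $\Phi_i^\tau$ and $\bar{\zeta}_i$: they make the $\hat{x}_i$-dynamics an exact Luenberger observer for $\bar{x}_i$.

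The next step is to set $\tilde{e}_i=\bar{x}_i-\hat{x}_i$ and verify that it satisfies the decoupled equation $\dot{\tilde{e}}_i=(A-KC)\tilde{e}_i$, which is asymptotically stable by the choice of $K$; note the absence of delays here, because the delayed input $Bu_i(t-\tau_i)$ and the coupling term $B\Phi_i^\tau$ appear identically on both sides and cancel. Substituting $\hat{x}_i=\bar{x}_i-\tilde{e}_i$ into the $\chi_i$ equation, and using that $\check{\zeta}_i+\iota_i\chi_i$ is component $i$ of $(\bar{L}\otimes I)\chi$, I get
\begin{equation*}
\dot{\chi} = (I\otimes A)\chi + (I\otimes B)u^\tau + (\bar{L}\otimes I)(\tilde{x}-\chi) - \tilde{e}.
\end{equation*}
Comparing with the closed-loop equation obtained in the proof of Theorem~\ref{mainthm1}, this is exactly the full-state closed loop driven by the exogenous signal $\tilde{e}$.

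Finally, set $e=\tilde{x}-\chi$ as before. Then the overall closed-loop system takes the cascade form
\begin{equation*}
\begin{system*}{l}
\dot{\tilde{e}} = (I\otimes(A-KC))\tilde{e}\\
\dot{e} = (I\otimes A-\bar{L}\otimes I)e + \tilde{e}\\
\dot{\tilde{x}} = (I\otimes A)\tilde{x} - \rho(I\otimes BB\T P_\eps)\tilde{x}^\tau + \rho(I\otimes BB\T P_\eps)e^\tau.
\end{system*}
\end{equation*}
The $\tilde{e}$-subsystem is a stable linear ODE, so $\tilde{e}\to 0$ exponentially. Step~1 of the proof of Theorem~\ref{mainthm1} shows that $I\otimes A-\bar{L}\otimes I$ is Hurwitz, so the $e$-subsystem, being a Hurwitz linear system driven by the vanishing input $\tilde{e}$, satisfies $e\to 0$. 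Step~2 of that proof, which relies only on the delay condition $\bar{\tau}\omega_{\max}<\pi/2$ and Lemma~\ref{hode-lemma-ddelay}, then yields $\tilde{x}\to 0$ for the same choices of $\rho>1/\cos(\bar{\tau}\omega_{\max})$ and sufficiently small $\eps$; the additional vanishing input $\rho(I\otimes BB\T P_\eps)e^\tau$ is harmless for asymptotic stability of the cascade.

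The main obstacle I anticipate is the careful bookkeeping showing that the communicated quantity $\Phi_i^\tau$ is exactly what is needed so that the observer error $\bar{x}_i-\hat{x}_i$ satisfies a delay-free, decoupled equation. Once this identity is in hand, the rest of the argument is a cascade stability statement built on Theorem~\ref{mainthm1}, and no new delay analysis is required, which is why the upper bound \eqref{boundtau} is inherited unchanged.
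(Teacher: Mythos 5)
Your proposal is correct and follows essentially the same route as the paper: your $\tilde{e}=\bar{x}-\hat{x}$ with $\bar{x}=(\bar{L}\otimes I)\tilde{x}$ is exactly the paper's change of variables $\bar{e}=(\bar{L}\otimes I)\tilde{x}-\hat{x}$, and your cascade $(\tilde{e},e,\tilde{x})$ is identical to the paper's system \eqref{newsystem3}, after which both arguments invoke the Step 1/Step 2 analysis of Theorem \ref{mainthm1}. The only difference is expository: you motivate $\hat{x}_i$ explicitly as a Luenberger observer for the aggregated error $\bar{x}_i$ (verifying $\bar{\zeta}_i=C\bar{x}_i$ and the role of $\Phi_i^\tau$), which the paper leaves implicit.
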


\begin{proof}[Proof of Theorem \ref{mainthm2}]
	Similar to Theorem \ref{mainthm1}, let $\tilde{x}_i(t)=x_i(t)-x_r(t)$, we have
	\[
	\begin{system}{cll}
	\dot{\tilde{x}}_i(t)&=&A{\tilde{x}}_i(t)+Bu_i(t-\tau_i)\\
	\dot{\hat{x}}_i(t) &=& A\hat{x}_i(t)+B\Phi_i^\tau(t)+K(\bar{\zeta}_i(t)-C\hat{x}_i(t))+\iota_i Bu_i(t-\tau_i) \\
	\dot{\chi}_i(t) &=& A\chi_i(t)+Bu_i(t-\tau_i)+\hat{x}_i(t)-\check{\zeta}_i(t)-\iota_{i}\chi_i (t)
	\end{system}
	\]
	
	Then we have the following closed-loop system
	\begin{equation}
	\begin{system*}{l}
	\dot{\tilde{x}}(t)=(I\otimes A) \tilde{x}(t)-\rho (I\otimes BB\T P_\eps)\chi^\tau(t)\\
	\dot{\hat{x}}(t) = I\otimes (A-KC)\hat{x}(t)-\rho(\bar{L}\otimes B B\T P_\eps)\chi^\tau(t)+(\bar{L}\otimes KC)\tilde{x}(t) \\
	\dot{\chi} (t)= (I\otimes A-\bar{L}\otimes I)\chi(t)-\rho(I\otimes BB\T P_\eps)\chi^\tau(t)+\hat{x}(t)
	\end{system*}
	\end{equation}
	
	By defining $e(t)=\tilde{x}(t)-\chi(t)$ and $\bar{e}(t)=(\bar{L}\otimes I)\tilde{x}(t)-\hat{x}(t)$, we can obtain  
	\begin{equation}\label{newsystem3}
	\begin{system*}{l}
	\dot{\tilde{x}}(t)=(I\otimes A) \tilde{x}(t)-\rho(I\otimes BB\T P_\eps)\tilde{x}^\tau(t)+\rho(I\otimes BB\T P_\eps)e^\tau(t)\\
	\dot{\bar{e}}(t)=I\otimes (A-KC)\bar{e}(t)\\
	\dot{e}(t)=(I\otimes A-\bar{L}\otimes I)e(t)+\bar{e}(t)
	\end{system*}
	\end{equation}
	
	Similar to Theorem \ref{mainthm1}, we prove the stability of \eqref{newsystem3} without delays first,
	\begin{equation}\label{newsystem33}
	\begin{system*}{l}
	\dot{\tilde{x}}(t)=(I\otimes A) \tilde{x}(t)-\rho(I\otimes BB\T P_\eps)\tilde{x}(t)+ \rho(I\otimes BB\T P_\eps)e(t)\\
	\dot{\bar{e}}(t)=I\otimes (A-KC)\bar{e}(t)\\
	\dot{e}(t)=(I\otimes A-\bar{L}\otimes I)e(t)+\bar{e}(t)
	\end{system*}
	\end{equation}
	
	Since we have $A-KC$ and $I\otimes A-\bar{L}\otimes I$ are Hurwitz stable, one can obtain
	\[
	\lim_{t\to \infty}\bar{e}(t)\to 0 \text{ and }\lim_{t\to \infty}e(t)\to 0
	\]
	i.e. we just need to prove the stability of 
	\[		
	\dot{\tilde{x}}_i(t)=(A-\rho BB\T P_\eps)\tilde{x}_i(t).
	\]		
	Similarly to Theorem \ref{mainthm1}, we can obtain the result about the stability of the above system directly.
	
	Then, since $e_i(t)$ and $\bar{e}_i(t)$ are all asymptotically stable, we just need to prove the stability 
	\[
	\dot{\tilde{x}}_i(t)= A \tilde{x}_i(t)-\rho BB\T P_\eps\tilde{x}_i(t-\tau_i)
	\]
	or prove \eqref{bound-condition} for $\omega\in \R$ and $\tau_i\in[0,\bar{\tau}]$ based on Lemma \ref{hode-lemma-ddelay}. 
	
	Similar to the proof of Theorem \ref{mainthm1}, we can obtain the synchronization result.
\end{proof}

\section{MATLAB Implementation and Example}

\subsection{Implementation}
In this subsection, we present and discuss our implementation of the problem at hand, allowing us to visualize and put to use the theory that we have developed. We discuss two main files that we include in our source code folder, along with one helper file, all implemented with MATLAB. Please see {\tt https://github.com/wsucontrolgroup/ scale-free-input-delay.git}.

The file \verb+protocol_design.m+ designs the central product of this paper, the protocol, setting it up for use. It accepts \emph{only} the agent model ($A,B,C$) and an upper bound on the delays ($\bar{\tau}$). Recall that in the full-state coupling case, the protocol is given by \eqref{pscp1} and \eqref{arespecial} and in the partial-state coupling case by \eqref{pscp3} and \eqref{arespecial}.
In view of these equations, the function \verb+protocol_design+ returns the relevant data necessary to define them. Namely, $\epsilon^*$, $\rho,$ $P_{\epsilon},$ and $K$. Note that the selection of $K$ is arbitrary, and we welcome the user to change our code to pick any value as far as $A - KC$ is Hurwitz stable. 

We describe briefly how this function operates. The proof of Theorem 1, particularly the definitions of $\epsilon^*$ and $\rho$, reveal that there is a large degree of freedom in choosing the pair $(\epsilon^*, \rho).$ Furthermore, different choices can lead to drastically different speeds of convergence (i.e., how quickly $x_i$ converges to $x_r$ for $i = 1, \dots, N$). For fixed $\rho$, among valid choices of $\epsilon^*$, faster convergence is typically obtained by larger $\epsilon^*.$ In this vein, our algorithm seeks to obtain a less conservative $\epsilon^*$, given a fixed $\rho$. This is done by first choosing $\theta$ based on $\rho$, from which we make a non-conservative estimate of $\mu$, finally choosing $\epsilon^*$ based on its definition (which involves $\mu$). Moreover, we comment that this file in no way chooses the best $(\epsilon^*, \rho)$ pair that optimizes convergence. We simply guarantee that our parameters satisfy the solvability conditions laid out in this paper. The existence of an algorithm that chooses optimal $(\epsilon^*, \rho)$ in all generality remains an open question, and will be the subject of future research.

The second and final main file we include is the \verb+ input_delay_solver.m+  file, which is a complete simulation package. This file defines a function of the same name that accepts an agent model, the delays, the adjacency matrix of the communication network, the set of leader nodes, and the initial conditions (in addition to the period of integration $T_{\max}$, which specifies the time interval $(0,T_{\max})$ that the user wants the solution over). From here, the function uses \verb+protocol_design+ to choose acceptable protocol parameters to achieve regulated state synchronization as stated in \eqref{synch_org} through the use of protocols \eqref{pscp1} and \eqref{arespecial} in the case of full-state coupling and \eqref{pscp3} and \eqref{arespecial} for partial-state coupling. If matrix $C$ passed to the function is the identity, the protocol for full-state coupling will be enacted, otherwise, partial state coupling protocol will be utilized.  We make implicit use of the MATLAB \verb+dde23+  solver. Finally, a time series, along with the values of the states, exosystem, and the input at each time step, are all extracted from the resulting structure and returned. This gives data which can be easily plotted or used for a variety of purposes.

We also include a file \verb+initial_conditions.m+ that allows the user to format their initial conditions in a particular way that will be accepted by our solver. We hope the inclusion of these files will allow the reader to illustrate our results for themselves if so desired, and more importantly, that those who have use for a product of this form will profit from them in their own endeavors.

\begin{figure}[t]
	\includegraphics[width=4cm, height=2.5cm]{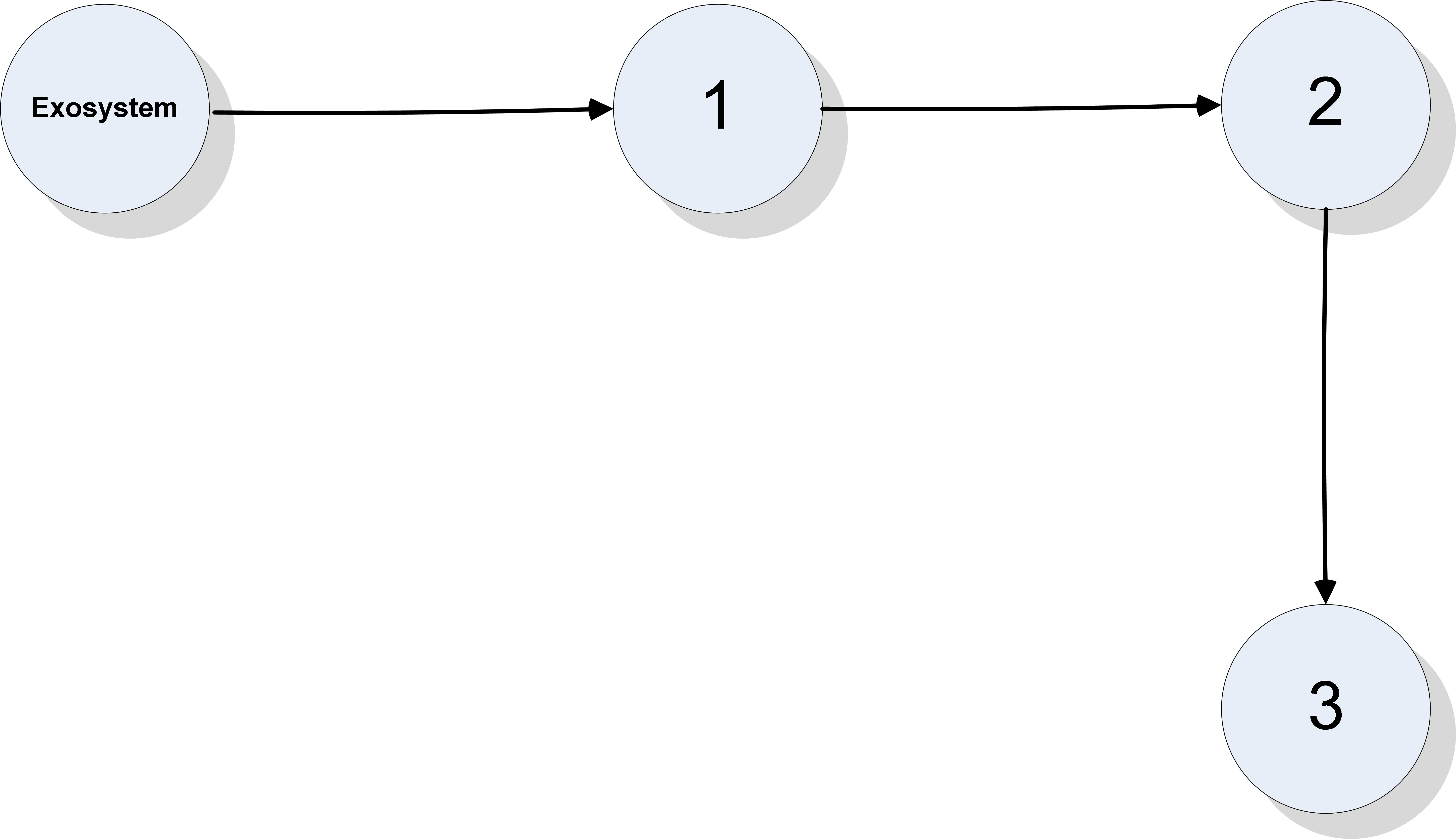}
	\centering
	\caption{Directed communication network with $3$ nodes}\label{graph_3nodes}
\end{figure}

\begin{figure}[t]
	\includegraphics[width=5cm, height=3cm]{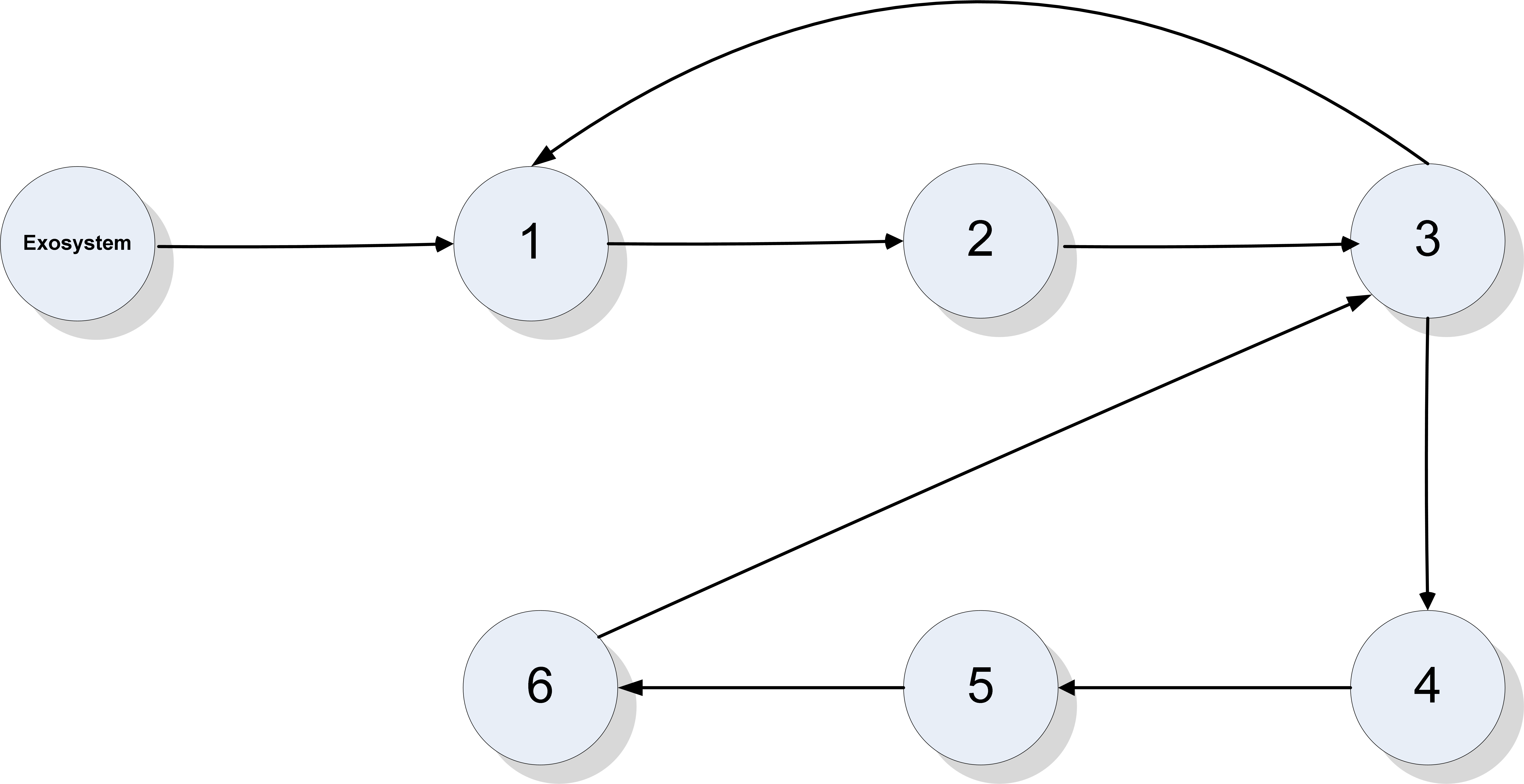}
	\centering
	\caption{Directed communication network with $6$ nodes}\label{graph_6nodes}
\end{figure}

\begin{figure}[t!]
	\includegraphics[width=8cm, height=3cm]{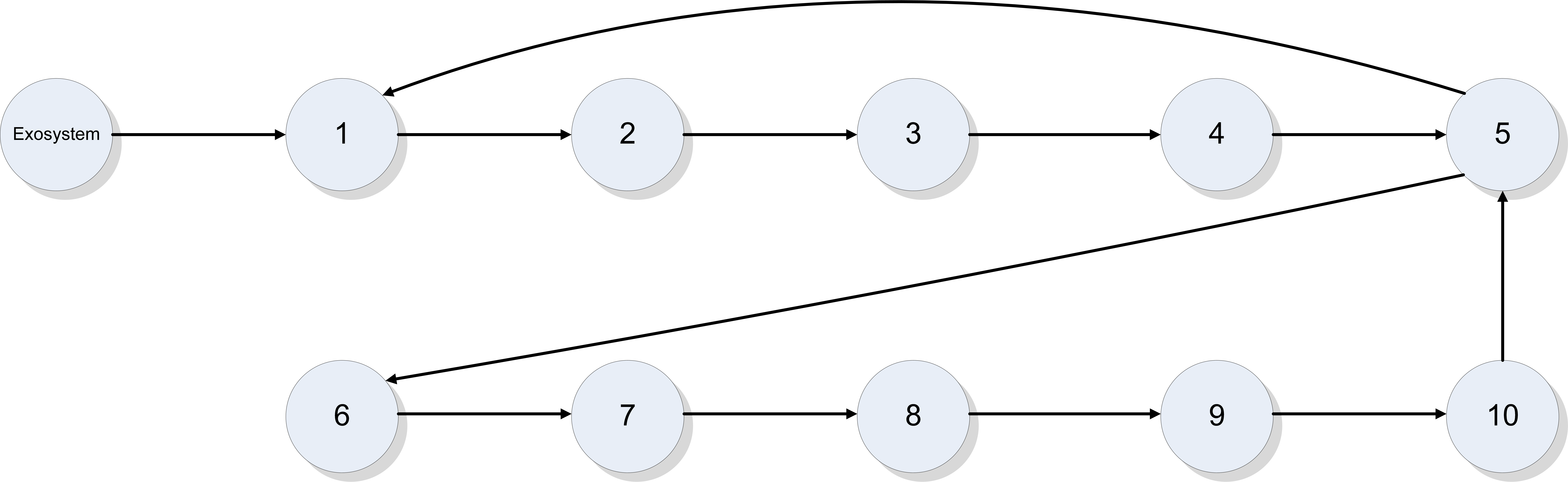}
	\centering
	\caption{Directed communication network with $10$ nodes}\label{graph_10nodes}
\end{figure}

\begin{figure}[t!]
	\includegraphics[width=5cm, height=2.5cm]{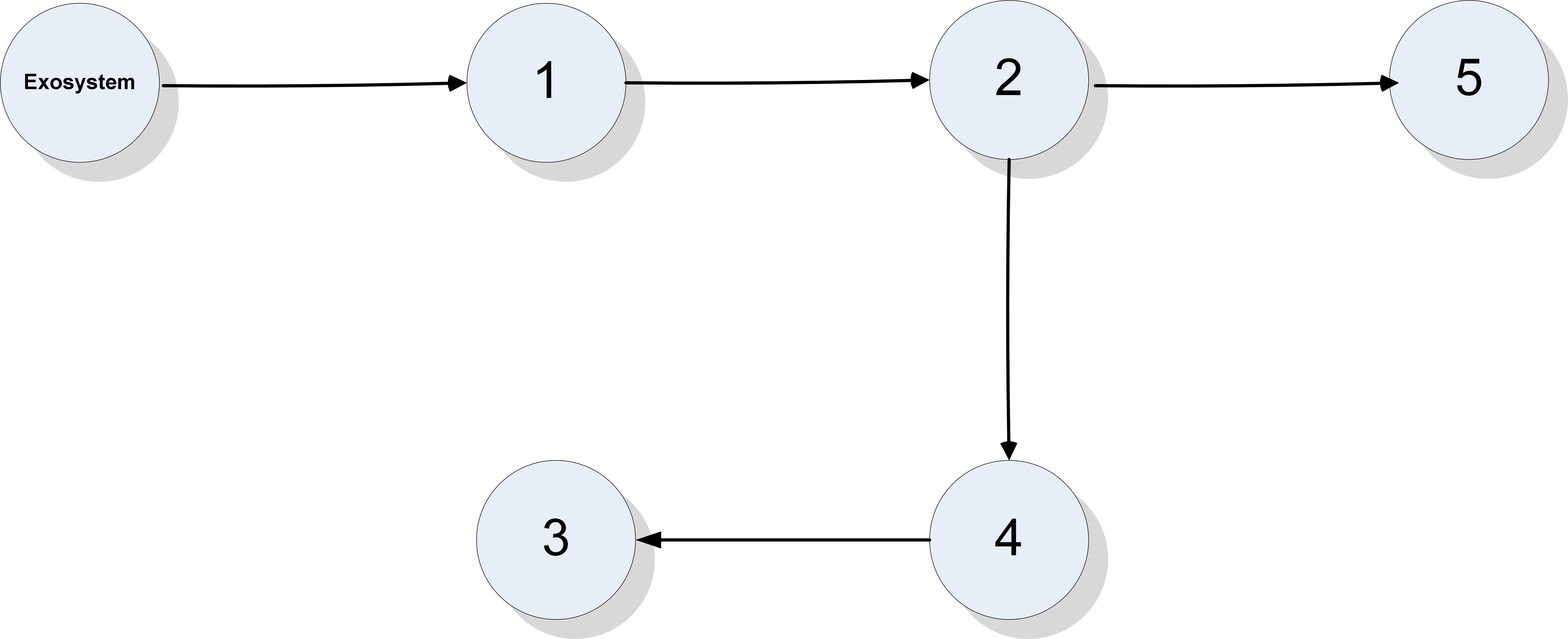}
	\centering
	\caption{Directed communication network with $5$ nodes}\label{graph_5nodes}
\end{figure}

\subsection{Numerical example}
\begin{figure}[t]
	\includegraphics[width=9cm, height=7cm]{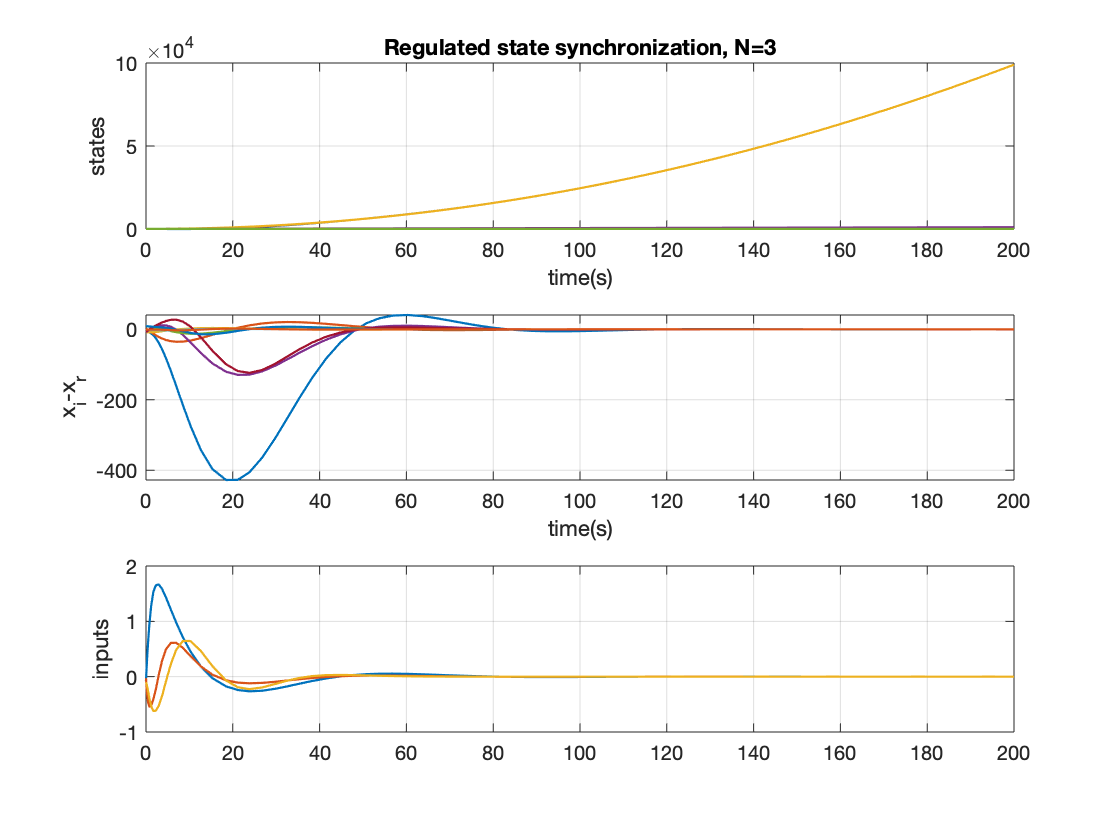}
	\centering
	\caption{Regulated state synchronization of MAS with full-state coupling with $N=3$}\label{Results_3nodes_Full}
\end{figure}

\begin{figure}[t]
	\includegraphics[width=9cm, height=7cm]{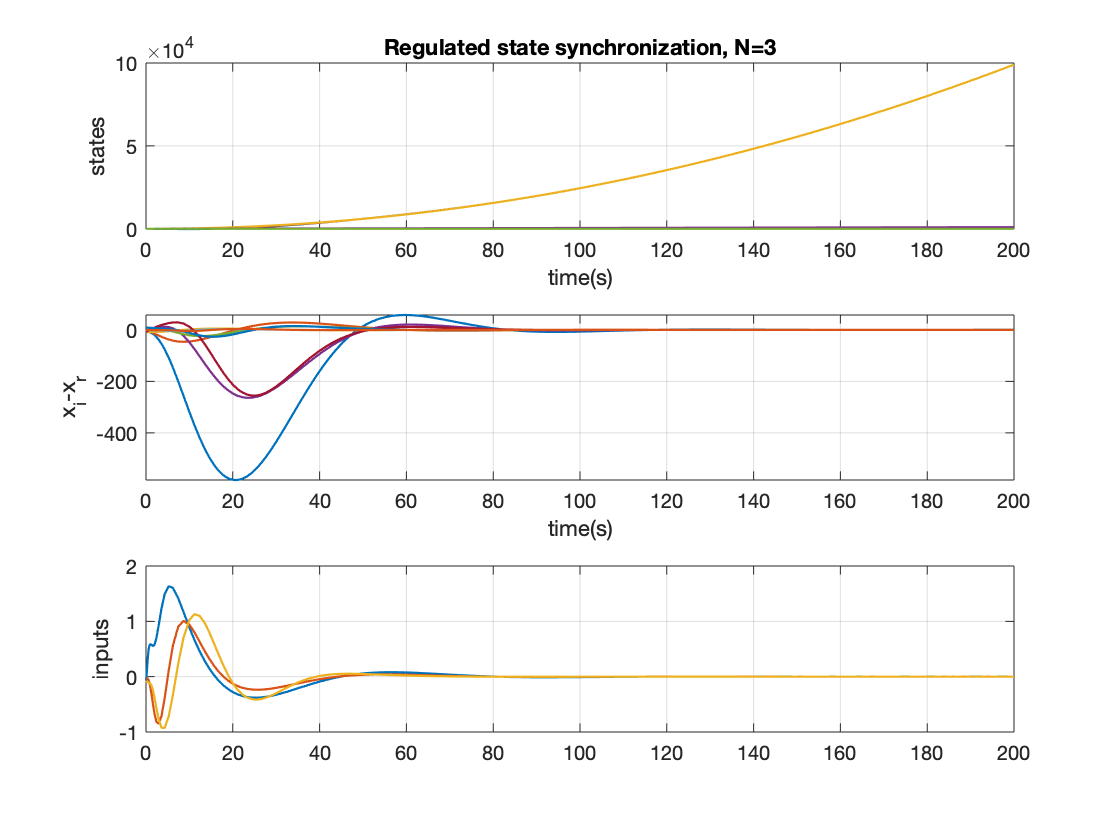}
	\centering
	\caption{Regulated state synchronization of MAS with partial-state coupling with $N=3$}\label{Results_3nodes}
\end{figure}

\begin{figure}[t]
	\includegraphics[width=9cm, height=7cm]{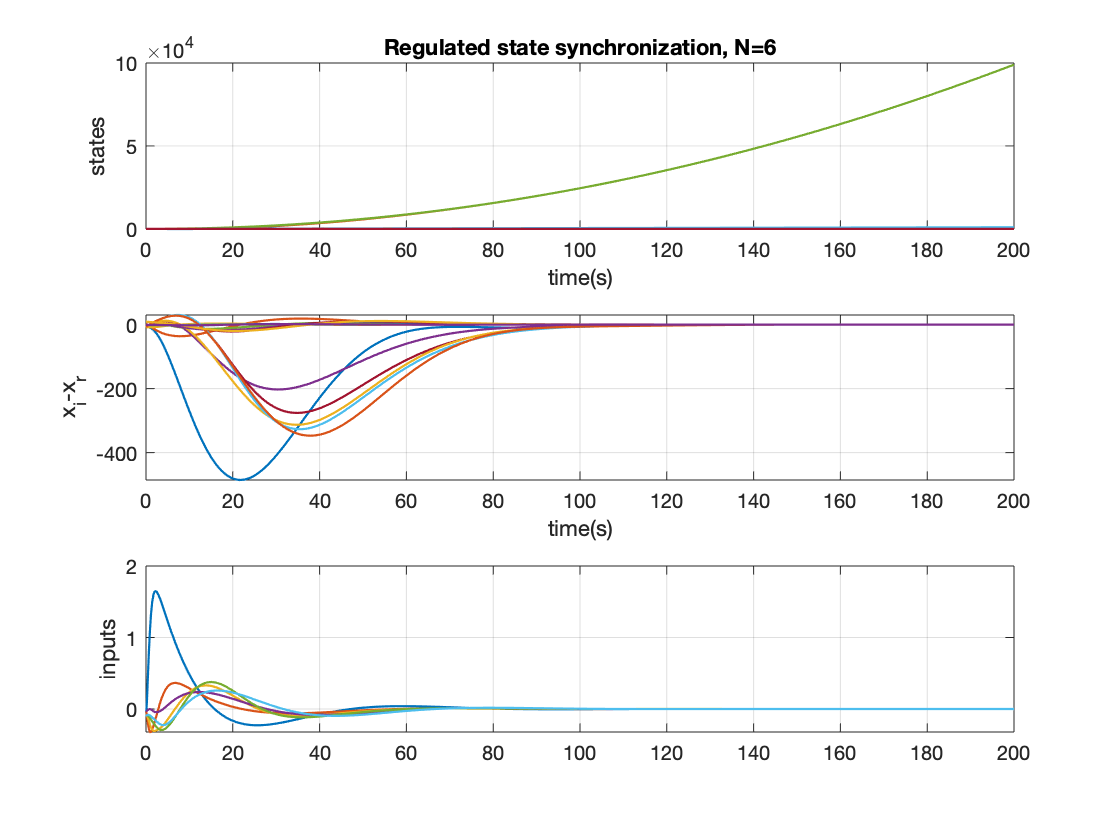}
	\centering
	\caption{Regulated state synchronization of MAS with full-state coupling with $N=6$}\label{Results_6nodes_Full}
\end{figure}

\begin{figure}[t]
	\includegraphics[width=9cm, height=7cm]{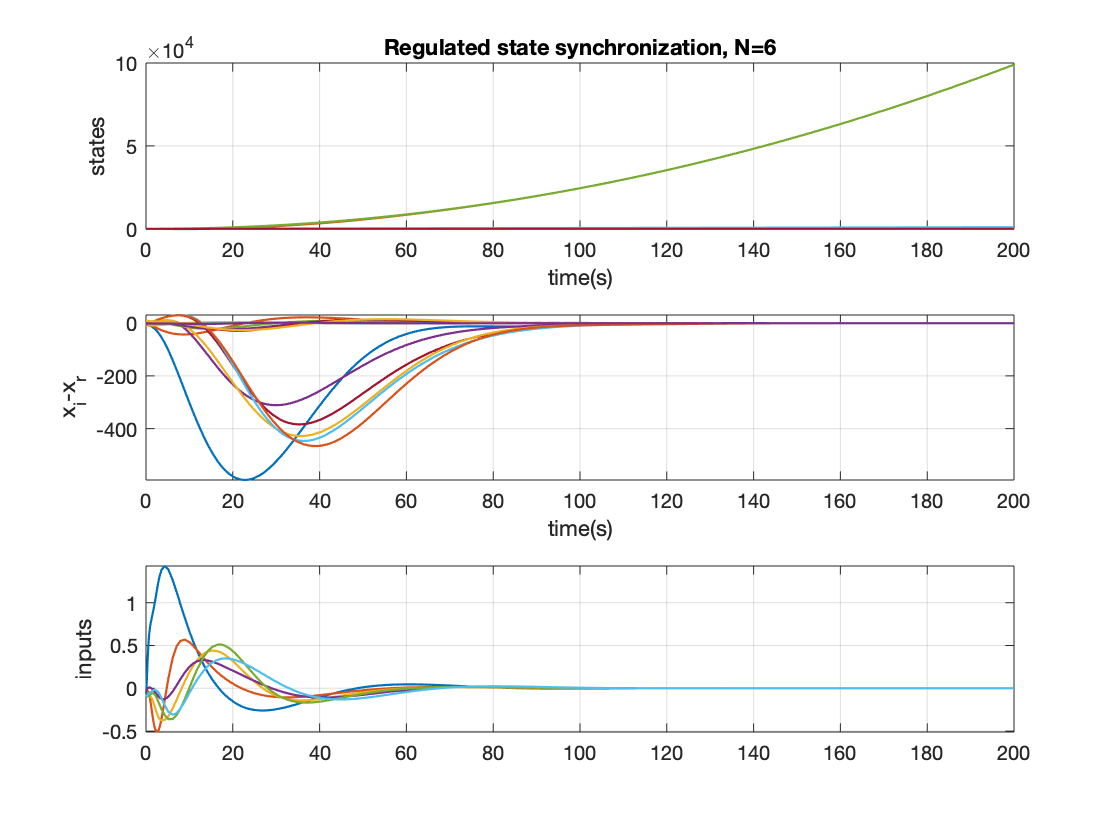}
	\centering
	\caption{Regulated state synchronization of MAS with partial-state coupling with $N=6$}\label{Results_6nodes}
\end{figure}

Consider the agents model \eqref{eq1} with full-state coupling as
\begin{equation*}\label{ex1}
\dot{x}_i(t)=\begin{pmatrix}
0&1&0\\0&0&1\\0&0&0
\end{pmatrix}x_i(t)+\begin{pmatrix}
0\\0\\1
\end{pmatrix}u_i(t-\tau_{i}),
\end{equation*}
and the agent models with partial-state coupling are as:
\begin{equation*}\label{ex2}
\begin{cases}
\dot{x}_i(t)=\begin{pmatrix}
0&1&0\\0&0&1\\0&0&0
\end{pmatrix}x_i(t)+\begin{pmatrix}
0\\0\\1
\end{pmatrix}u_i(t-\tau_{i}),\\
y_i(t)=\begin{pmatrix}
1&0&0
\end{pmatrix}x_i(t)
\end{cases}
\end{equation*}
In the following four cases, we simulate the regulated state synchronization via protocols \eqref{pscp1} and \eqref{pscp3} for MAS with full- and partial-state coupling respectively. Matrix $K$ in all the four cases with partial-state coupling is chosen as $K\T=\begin{pmatrix}6&11&6\end{pmatrix}$. Parameter $\rho$ is chosen as $\rho=1$ in all protocols of full- and partial-state coupling cases. Moreover, with the choise of $\eps=10^{-5}$ matrix $P_\eps$ is obtained by solving ARE \eqref{arespecial} as
\[
P_\eps=\begin{pmatrix}
0.0001&    0.0009 &   0.0032\\
0.0009   & 0.0096 &   0.0432\\
0.0032   & 0.0432  &  0.2941\\
\end{pmatrix}.
\]

\begin{figure}[t]
	\includegraphics[width=9cm, height=7cm]{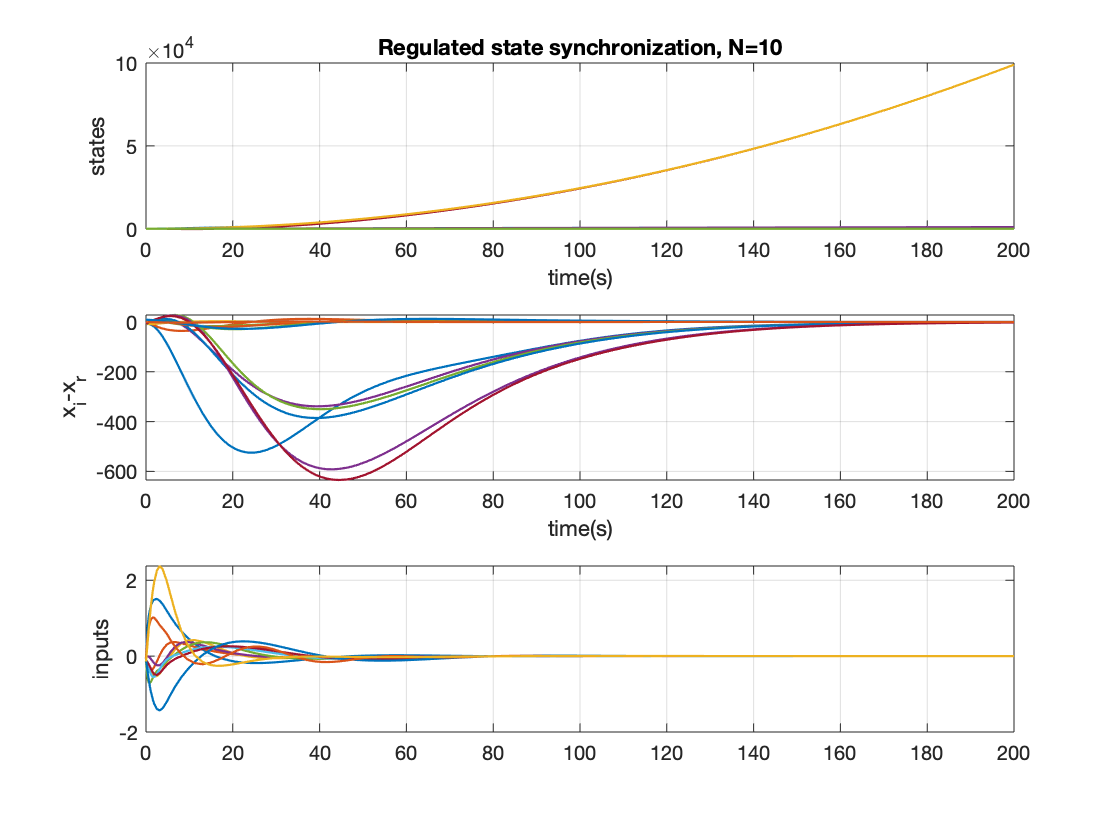}
	\centering
	\caption{Regulated state synchronization of MAS with full-state coupling with $N=10$}\label{Results_10nodes_Full}
\end{figure}

\begin{figure}[t]
	\includegraphics[width=9cm, height=7cm]{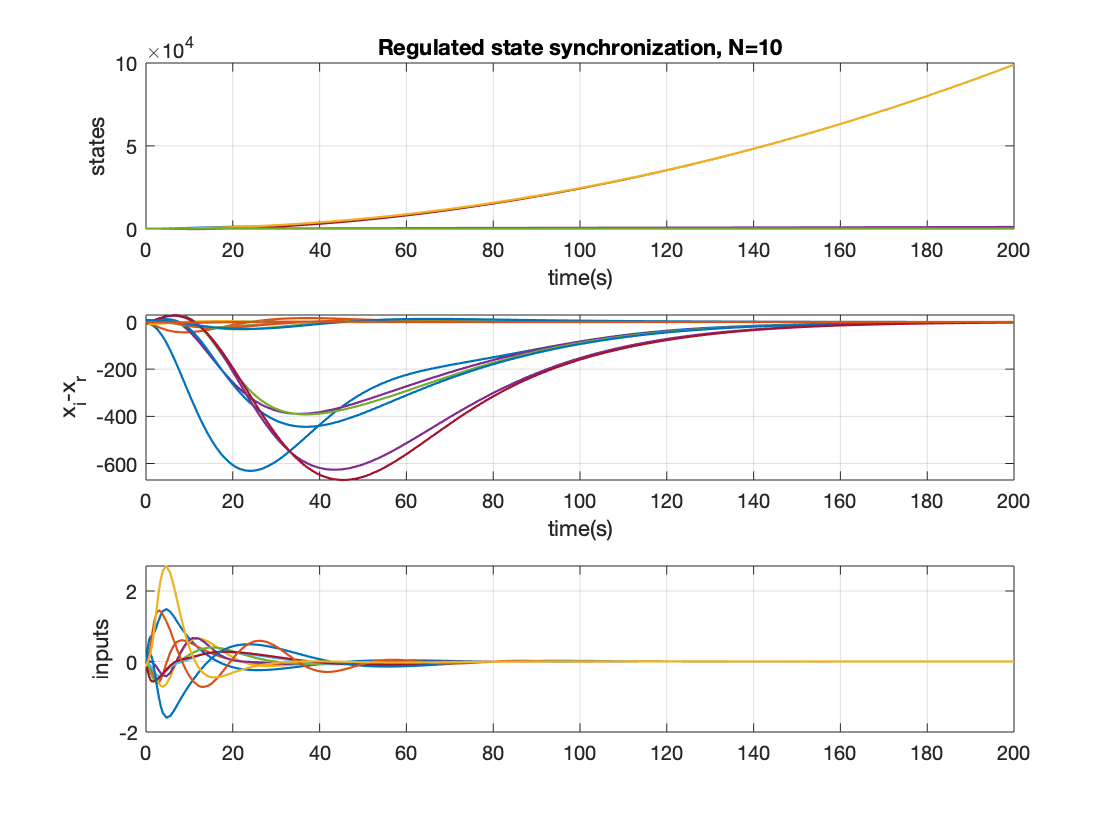}
	\centering
	\caption{Regulated state synchronization of MAS with partial-state coupling with $N=10$}\label{Results_10nodes}
\end{figure}

\begin{figure}[t]
	\includegraphics[width=9cm, height=7cm]{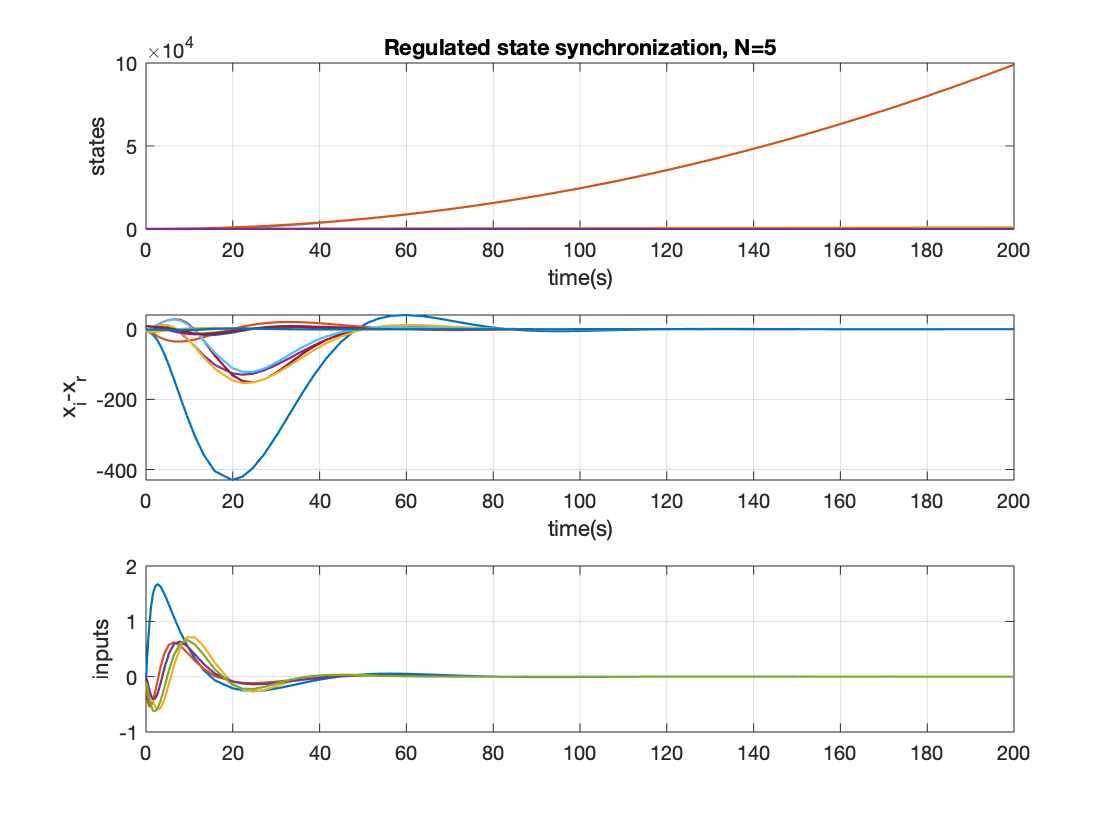}
	\centering
	\caption{Regulated state synchronization of MAS with full-state coupling with $N=5$}\label{Results_5nodes_Full}
\end{figure}

\begin{figure}[t]
	\includegraphics[width=9cm, height=7cm]{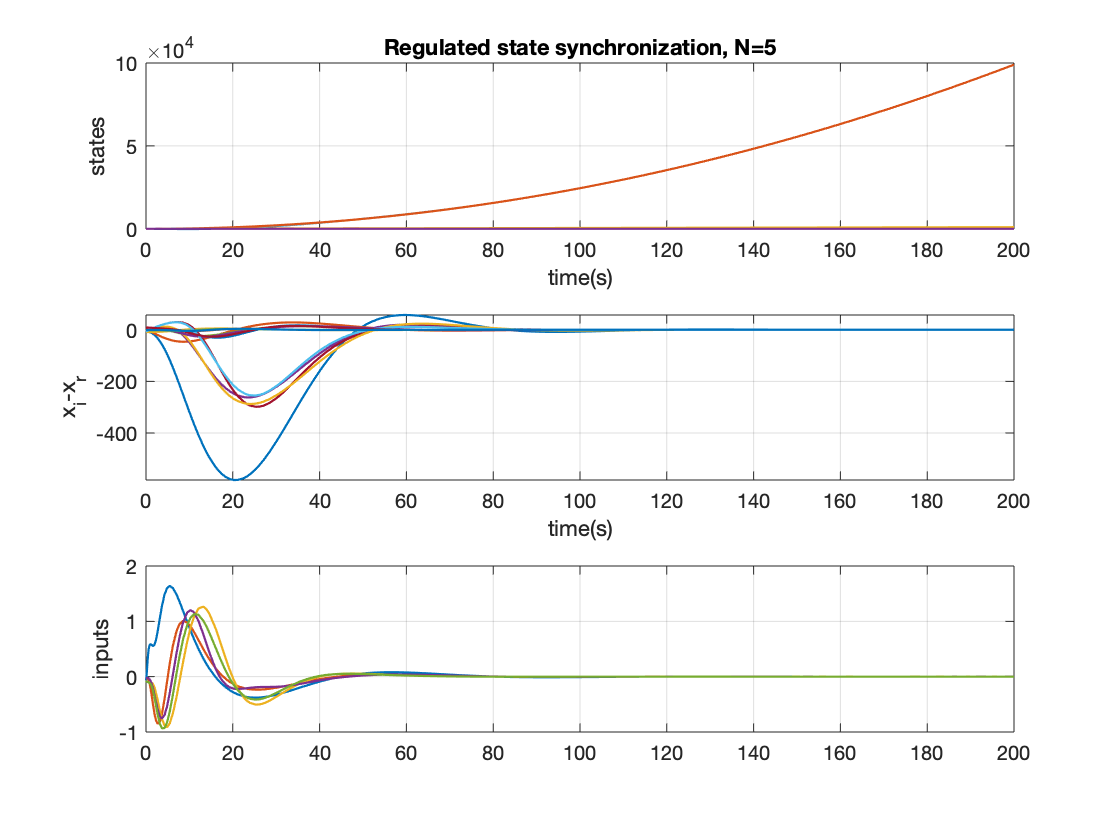}
	\centering
	\caption{Regulated state synchronization of MAS with partial-state coupling with $N=5$}\label{Results_5nodes}
\end{figure}

\begin{enumerate}
	\item  Firstly, we consider a MAS with $3$ agents, $N =3$ and communication graph shown in Figure \ref{graph_3nodes}. In this example, delays are as $\tau_1=1sec$, $\tau_2=2sec$, and $\tau_3=3sec$ for both networks with full- and partial-state coupling. The results of state synchronization of MAS with full-state coupling via protocol \eqref{pscp1} and partial-state coupling via protocol \eqref{pscp3} are presented in Figure \ref{Results_3nodes_Full} and \ref{Results_3nodes}, respectively.
	
	\item Next, we consider a MAS with $6$ agents, $N=6$, and communication topology with the associated graph shown in figure \ref{graph_6nodes}. The delays are chosen as $\tau_1=1sec$, $\tau_2=\tau_4=2sec$, $\tau_3=\tau_5=3sec$ and $\tau_6=1.5sec$ for both networks with full- and partial-state coupling. The simulation results for regulated state synchronization of MAS with full-state coupling and partial-state coupling are presented in Figure \ref{Results_6nodes_Full} and \ref{Results_6nodes}, respectively.
	
	\item In this case, we consider a MAS with $10$ agents, $N=10$, and communication topology with the associated graph shown in figure \ref{graph_10nodes}. The delays in both networks with full- and partial-state coupling are chosen as $\tau_1=1sec$, $\tau_2=\tau_4=2sec$, $\tau_3=\tau_5=3sec$, $\tau_6=1.5sec$, $\tau_7=0.5sec$, $\tau_8=0.7sec$, $\tau_9=4sec$, and $\tau_{10}=2.5sec$. The simulation results for these networks are illustrated in Figure \ref{Results_10nodes_Full} and \ref{Results_10nodes}.
	
	\item Finally, the simulation results for state synchronization of a MAS with $5$ agents and communication topology shown in Figure \ref{graph_5nodes}, are illustrated in Figure \ref{Results_5nodes_Full} and \ref{Results_5nodes} where the values of delays are as $\tau_1=1sec$, $\tau_2=\tau_4=2sec$, $\tau_3=\tau_5=3sec$.
\end{enumerate}

The simulation results show that our one-shot protocol designs do not need any knowledge of the communication network and achieve regulated state synchronization for any network with any number of agents. Moreover, we observe that upper bounds on the input delay tolerance only depends on the agent dynamics. 
\bibliographystyle{plain}
\bibliography{referenc}

\end{document}